\newcommand{\la}{\mathsf{b}}
\newcommand{\cla}{c_{\mathsf{b}}}
\newcommand{\imu}{i}
\newcommand{\bs}[1]{\operatorname{f}}
\newcommand{\uop}{\boldsymbol{u}}
\newcommand{\vop}{\boldsymbol{v}}
\newcommand{\pop}{\boldsymbol{p}}
\newcommand{\xop}{\boldsymbol{x}}
\newcommand{\bb}{\mathsf{b}}
\newcommand{\bP}{\mathbb{P}}
 \newcommand{\R}{\mathbb{R}}
  \newcommand{\C}{\mathbb{C}}
    \newcommand{\Z}{\mathbb{Z}}
  \newtheorem{theorem}{Theorem}
\newtheorem{proposition}{Proposition}
\theoremstyle{remark}
\newtheorem{remark}{Remark}
\newtheorem{corollary}{Corollary}
\begin{document}

\title{On the spectrum of the local $\mathbb{P}^2$ mirror curve}
\author{Rinat Kashaev}
\address{Section de Math\'ematiques, Universit\'e de Gen\`eve \\
2-4 rue du Li\`evre, Case Postale 64, 1211 Gen\`eve 4, Switzerland 
}
\email{rinat.kashaev@unige.ch}
\author{Sergey Sergeev}
\address{Department of Theoretical Physics, Research School of Physics and Engineering, Australian National University, Canberra, ACT 0200, Australia
}
\email{sergey.m.sergeev@gmail.com}
\thanks{The work is partially supported by Australian Research Council and Swiss National Science Foundation.}
\subjclass{Primary 39A13; Secondary 33E30}
\keywords{Spectral problem, TS/ST correspondence, Heisenberg--Weyl operators, Faddeev duality}
\date{April 28, 2019}
\begin{abstract}
We address the spectral problem of the normal quantum mechanical operator associated to the quantised mirror curve of the toric (almost) del Pezzo Calabi--Yau threefold called local $\mathbb{P}^2$ in the case of complex values of Planck's constant.
\end{abstract}
\maketitle

\section{Introduction}
The recent progress in topological string theory reveals connections between spectral theory, integrable systems and local mirror symmetry. The results on linkings of some quantum mechanical spectral problems with integrable systems and conformal field theory~\cite{MR1733841,MR1832065}, together with the relation of topological strings in toric Calabi--Yau manifolds to integrable systems~\cite{MR2191887,MR2730782,MR2673092,MR3036500,MR2935521,MR3045482,MR3267911,MR3486428,MR3267929}, have lead to the conjecture on the topological string/spectral theory (TS/ST) correspondence~\cite{MR3556519,MR3596771}.  In many cases, quantisation of mirror curves produces trace class quantum mechanical operators, and, according to the TS/ST correspondence, their spectra seem to contain a great deal of information of the enumerative geometry of the underlying Calabi--Yau manifold, see \cite{MR3821757} for a review and references therein.  

In the case of toric (almost) del Pezzo Calabi--Yau threefold known as local $\mathbb{P}^2$, the corresponding operator is of the form
\begin{equation}\label{Hams0}
\boldsymbol{O}_{\bP^2}=\uop+\vop +e^{i\hbar/2}\vop^{-1}\uop^{-1}
\end{equation}
with invertible normal operators $\uop$ and $\vop$ such that
\begin{equation}\label{hwcr}
\uop\vop=e^{i\hbar}\vop\uop,\quad \uop\vop^\dagger=\vop^\dagger\uop.
\end{equation}

With various levels of generality, the spectral problem for similar operators has been addressed in \cite{ Kashani-Poor2016} from the perspective of exact WKB approximation, in \cite{MarinoZakany2016,GrassiHatsudaMarino2017} using a matrix integral representation of the eigenfunctions, and in \cite{MR2165903,MR3820682,MR3833050} from the standpoint of quantum integrable systems. In this paper, following the approach of \cite{MR2165903,MR3833050}, we address the spectral problem of operator~\eqref{Hams0} in the case $\hbar=2\pi e^{i2\theta}$ with  $\theta\in]0,\pi/2[$. 


\section{Definitions and notation}
\subsection{Heisenberg operators} Let  $\xop$ and $\pop$ be normalised  self-adjoint quantum mechanical {\color{blue}\emph{Heisenberg operators}} in the Hilbert space $L^2(\R)$ defined by their realisation  in the ``position representation'':
\begin{equation}\label{x-p-op}
\langle x|\xop =x \langle x|,\quad \langle x|\pop=\frac1{2\pi i}\frac{\partial}{\partial x}\langle x|.
\end{equation}
Here we use Dirac's bra-ket notation so that for any $\psi\in L^2(\R)$, we write 
\begin{equation}
\psi(x)=\langle x|\psi\rangle,
\end{equation}
\begin{equation}
\langle x|\xop|\psi\rangle=x\langle x|\psi\rangle
\end{equation}
if $\psi$ is in the domain of $\xop$
 and
 \begin{equation}
\langle x|\pop|\psi\rangle=\frac1{2\pi i}\frac{\partial}{\partial x}\langle x|\psi\rangle
\end{equation}
 if $\psi$ is in the domain of $\pop$.
One can easily verify the Heisenberg commutation relation 
\begin{equation}
[\pop,\xop]:=\pop\xop-\xop\pop=\frac{1}{2\pi i}.
\end{equation}
\subsection{Heisenberg--Weyl normal operators} We fix $\theta\in ]0,\pi/2[$,  denote 
\begin{equation}
\bb:=e^{i\theta},\quad\bar\bb:=\frac1\bb,\quad q:=e^{\pi i \bb^2},\quad \bar q:=e^{-\pi i \bar\bb^{2}},\quad c_\bb:=\frac i2(\bb+\bar\bb)=i\cos\theta
\end{equation}
and define the normal  {\color{blue}\emph{Heisenberg--Weyl operators}}
\begin{equation}\label{expop1}
\uop:=e^{2\pi\bb\xop},\quad \vop:=e^{2\pi\bb\pop}
\end{equation}
that have the Hermitian conjugates 
\begin{equation}\label{expop-hc}
\uop^\dagger=e^{2\pi\bar \bb\xop},\quad \vop^\dagger=e^{2\pi\bar\bb\pop}
\end{equation}
and satisfy the  commutation relations
\begin{equation}
\uop\vop=q^2\vop\uop,\quad \uop\vop^\dagger=\vop^\dagger \uop.
\end{equation}
Thus, if $\boldsymbol{a}$ and $\boldsymbol{b}$ are arbitrary elements of the algebra generated by $\uop$ and $\vop$, then $\boldsymbol{a}\boldsymbol{b}^\dagger=\boldsymbol{b}^\dagger \boldsymbol{a}$. 

\subsection{A sequence of polynomials}\label{subsec2.4}

To any $n\in\Z_{\ge0}$, we associate a polynomial $p_n=p_n(q,E)\in \Z[q,q^{-1}][E]$ of degree $n$ in $E$ defined by the following recurrence equation 
\begin{equation}\label{p_{n+1}=Ep_{n}+(q^n-q^{-n})(q^{n-1}-q^{1-n})p_{n-2}}
p_{n+1}=Ep_{n}+(q^n-q^{-n})(q^{n-1}-q^{1-n})p_{n-2},\quad p_0=1.
\end{equation}
 Notice the symmetry $p_n(q,E)=p_n(1/q,E)$. Denoting $q_n:=q^n-q^{-n}$, the few first polynomials read as follows:
\begin{multline}
p_0=1,\quad p_1=E, \quad p_2=E^2,\quad  p_3=E^3+q_1q_2,\quad
 p_4=E^4+\frac{q_2^3}{q_1}E,\\
   p_5=E^5+\frac{(q_1^2+q_3^2)q_2}{q_1}E^2,\quad  p_6=E^6+\frac{(q_2^2+5)q_2^3}{q_1}E^3+q_1q_2q_4q_5,\quad \dots
\end{multline}
Among the properties of these polynomials, one can show that $p_n(q,0)=0$ unless $n\equiv 0\pmod 3$ and 
\begin{equation}
p_{3m}(q,0)=q^{-3m^2}(q^2;q^6)_m(q^4;q^6)_m,\quad \forall m\in \mathbb{Z}_{\ge0},
\end{equation}
where we use the standard {\color{blue}\emph{$q$-Pochhammer symbol}}
\begin{equation}
(x;q)_n:=\prod_{k=0}^{n-1}(1-xq^k), \quad (x,y,\dots;q)_n:=(x;q)_n(y;q)_n\cdots
\end{equation}
One can also show that
\begin{equation}
p_{n}=\sum_{0\le m\le n/3}E^{n-3m}p_{n,m},\quad p_{n,m}=\sum_{|k|\le m}q^{2nk}\alpha_{k,m}(n)
\end{equation}
where $\alpha_{k,m}(x)$ are polynomials in $x$ of degree $m-|k|$ satisfying the recurrence relations
\begin{multline}
q^{6k}\alpha_{k,m+1}(x+3)-q^{4k}\alpha_{k,m+1}(x+2)\\=q^3\alpha_{k-1,m}(x)+q^{-3}\alpha_{k+1,m}(x)-(q+q^{-1})\alpha_{k,m}(x).
\end{multline}
Furthermore, the leading asymptotics of $p_n$ at large $n$ is given by the formula
\begin{equation}\label{as-p-n}
p_n(q,E)\vert_{n\to\infty}\sim q^{-n^2/3}.
\end{equation}

It will be of particular interest for us the following two generating series for these polynomials:
\begin{equation}\label{phi_{q,E}(z)}
\phi_{q,E}(z):=\sum_{n=0}^\infty\frac{p_n(q,E)}{(q^{-2};q^{-2})_n}z^n
\end{equation}
and
\begin{equation}\label{psi_{q,E}(z)}
\psi_{q,E}(z):=\sum_{n=0}^\infty\frac{p_n(q,E)q^{(1-n)n/2}}{(q^{-2};q^{-2})_n}z^n=\psi_{1/q,E}(-z).
\end{equation}
Taking into account the inequality $|q|<1$ and the asymptotics~\eqref{as-p-n}, we remark that  for any $E\in\C$,  the radius of convergence is infinite for both of these series and vanishes for the series $\phi_{1/q,E}(z)$. 

\subsection{Vector spaces $F_{p,c}$, $G_{p,c}$ and $T^m_{p,r}$}
We let $H(\C_{\ne0})$ to denote
the complex vector space of holomorphic functions 
$
f\colon \C_{\ne0}\to \C
$
 and   
\begin{equation}
U(f):=\C_{\ne0}\setminus f^{-1}(0),\quad f\in H(\C_{\ne0}).
\end{equation}

Let  $c\in\C$, $p,r\in\C_{\ne0}$ and $m\in\Z$. We define vector subspaces of $H(\C_{\ne0})$
\begin{equation}
F_ {p,c}:=\{f\in H(\C_{\ne0})\mid f(z/p^2)+(zp)^3f(zp^2)=(1-cz)f(z)\},
\end{equation}
\begin{equation}
G_ {p,c}:=\{f\in H(\C_{\ne0})\mid f(zp)-f(z/p)=z(z^2+c)f(z)\}
\end{equation}
and 
\begin{equation}
T^m_ {p,r}:=\{f\in H(\C_{\ne0})\mid rz^mf(zp)=f(z)\}.
\end{equation}
Elements of $T^m_ {p,r}$ will be called {\color{blue}\emph{theta-functions}} of {\color{blue}\emph{order}} $m$.
For any $p\in\C_{\ne 0}$ such that $|p|<1$, one specific theta-function is defined by the series
\begin{equation}
\vartheta(z;p):=\sum_{n\in\Z}p^{(n-1)n/2}(-z)^n=(z,p/z,p;p)_\infty
\end{equation}
with the defining properties
\begin{equation}
\vartheta(1;p)=0,\quad \vartheta(pz;p)=\vartheta(z^{-1};p)=-\frac1z\vartheta(z;p)
\end{equation}
which imply that $\vartheta(\cdot;p)\in T^1_{p,-1}$.
Note also the {\color{blue}\emph{modularity property}}
\begin{equation}\label{mod-prop}
\vartheta(e^{2\pi\bb x};q^2)=\vartheta(e^{2\pi\bar \bb x};\bar q^2)e^{\pi i(x+\sin\theta)^2-i\theta+\pi i/4}.
\end{equation}
\begin{remark}
 Let $f\in T^m_{p,r}$ and $z\in U(f)$ (respectively $z\not\in U(f)$).  Then,  one has $zp^{\mathbb{Z}}\subset U(f)$ (respectively $zp^{\mathbb{Z}}\cap U(f)=\{\}$).
\end{remark}
\begin{remark}
 By expanding into Laurent series, one easily checks that the dimensions of  $F_{p,c}$ and $ G_{p,c}$ are at most 3. On the other hand, 
the recurrence relation~\eqref{p_{n+1}=Ep_{n}+(q^n-q^{-n})(q^{n-1}-q^{1-n})p_{n-2}} implies that $\psi_{p,c}\in G_{p,c}$ if $|p|\ne1$ and $\phi_{p,c}\in F_{p,c}$ if  $|p|<1$ so that  
\begin{equation}
|p|\ne1\Rightarrow \dim G_{p,c}\ge1
\end{equation}
and 
\begin{equation}
|p|<1\Rightarrow \dim F_{p,c}\ge1.
\end{equation}
The elements  $\phi_{p,c}$ and $\psi_{p,c}$ will be called \emph{\color{blue}regular elements} of the corresponding vector spaces.
\end{remark}
\begin{remark}
By expanding into Laurent series, it is easily verified  that 
\begin{equation}
\dim T^m_{p,r}\le |m|\ \text{ if }\ m\ne0
\end{equation}
with the equality if $|p|^m<1$. In particular, for $|p|<1$, $\dim T^1_{p,r}=1$ with the theta-function $\vartheta(-r z;p)$  being a basis element.
\end{remark}
\begin{remark}
One has the identifications
\begin{equation}
T^0_{1,1}=H(\C_{\ne0}),
\end{equation}
\begin{equation}
T^0_{p,1}=\C \ \text{ if }\ 1\not\in p^{\Z_{\ne0}},
\end{equation}
and the inclusions
\begin{equation}\label{incl-Tm-Tmn}
T^m_{p,r}\subset T^{mn}_{p^n,r^np^{mn(n-1)/2}},\quad \forall m,n\in\Z,\ \forall p,r\in\C_{\ne0},
\end{equation}
which for $n=-1$ become equalities
\begin{equation}
T^m_{p,r}= T^{-m}_{p^{-1},r^{-1}p^{m}},\quad \forall m\in\Z,\ \forall p,r\in\C_{\ne0}.
\end{equation}
\end{remark}
\begin{remark}
The multiplication of functions induces a linear map 
\begin{equation}
T^m_{p,r}\otimes T^n_{p,s}\to T^{m+n}_{p,rs},\quad \forall m,n\in\Z,\ \forall p,r,s\in\C_{\ne0}.
\end{equation}
 For example,
the product identity
\begin{equation}
\vartheta(z;q)\vartheta(-z;q)=\vartheta(q;q^2)\vartheta(z^2;q^2)
\end{equation}
 illustrates the special case   $T^1_{q,-1}\otimes T^1_{q,1}\to T^{2}_{q,-1}$.
\end{remark}
\begin{remark}\label{gg+gg=2}
Assuming $1\not\in p^{\Z_{\ne0}}$, let $g\in G_{p,c}$. Then, the even part of the product 
$
g(z)g(-zp)
$
is an element of the vector space $T^0_{p,1}=\C$. Thus, there exists a quadratic form $\omega\colon G_{p,c}\to\C$ such that
\begin{equation}
g(z)g(-zp)+g(-z)g(zp)=\omega(g),\quad \forall z\in\C_{\ne0}.
\end{equation}
In particular, if $|p|\ne1$, we have
\begin{equation}
\omega(\psi_{p,c})=2\psi_{p,c}(0)=2.
\end{equation}
\end{remark}

\section{Formulation of the spectral problem}
Let $\uop$ and $\vop$ be the normal Heisenberg--Weyl operators defined in \eqref{expop1}. Then, the  {\color{blue}\emph{Hamiltonian}}
\begin{equation}\label{Hams01}
\boldsymbol{H}:=\uop+\vop +q^{-1}\uop^{-1}\vop^{-1}=\uop+\vop +q\vop^{-1}\uop^{-1}
\end{equation}
is a normal operator, and the spectral problem consists in solving the system of {\color{blue}\emph{Schr\"odinger equations}}
\begin{equation}
\boldsymbol{H}|\Psi\rangle=E|\Psi\rangle,\quad \boldsymbol{H}^\dagger|\Psi\rangle=\bar{E}|\Psi\rangle
\end{equation}
in the Hilbert space $L^2(\R)$.
In the position representation \eqref{x-p-op}, it is equivalent to the following system of functional difference equations
\begin{equation}\label{fun-eq-1}
\Psi(x-i\bb)+q^{-1}e^{-2\pi \bb x}\Psi(x+i\bb)=(E-e^{2\pi \bb x})\Psi(x),
\end{equation}
\begin{equation}\label{fun-eq-2}
\Psi(x-i\bar\bb)+\bar{q}e^{-2\pi \bar\bb x}\Psi(x+i\bar\bb)=(\bar E-e^{2\pi \bar\bb x})\Psi(x)
\end{equation}
where $\Psi(x):=\langle x|\Psi\rangle$.  We are looking for an entire function $\Psi\colon \C\to \C$ that solves the functional equations~\eqref{fun-eq-1}, \eqref{fun-eq-2} and whose restriction to the real axis is square integrable.

Equations~\eqref{fun-eq-1} and \eqref{fun-eq-2} are related to each other by the simultaneous substitutions 
\begin{equation}
\bb\leftrightarrow\bar\bb,\quad q\leftrightarrow 1/\bar q,\quad E\leftrightarrow\bar E
\end{equation}
 which correspond to the  {\color{blue}\emph{Faddeev (modular) duality}}~\cite{MR1345554} which we will abbreviate as {\color{blue}\emph{F-duality}}. For this reason, in what follows, we will write only one equation (containing the variables $E$ and $q$), but implicitly there will always  be a second accompanying equation. In constructing solutions, we will follow the  {\color{blue}\emph{principle of F-duality}} corresponding to the invariance of the solutions under above substitutions. In this case, it will suffice to check only one equation as the other one will be satisfied automatically.

\section{F-dual asymptotics at $x\to \pm\infty$}
We start our analysis by addressing the problem of asymptotical behaviour of solutions of our spectral problem at large values of $x$. Following the principle of F-duality, we are looking for  possible F-dual asymptotics. 
\begin{proposition}
 Let $\Psi(x)$ be a solution  of equations~\eqref{fun-eq-1} and \eqref{fun-eq-2}. Then, one has the following possibilities for the F-dual asymptotic behaviour of $\Psi(x)$ at large $x$:
 \begin{equation}\label{psi-0}
\Psi(x)\vert_{x\to-\infty}\sim\psi_0(x):= e^{-\pi\imu x^2/2 -\pi \imu \cla x},
\end{equation}
 \begin{equation}\label{psi-1}
\Psi(x)\vert_{x\to+\infty}\sim\psi_1(x):= e^{\pi \imu x^2 +2\pi \imu\cla x},
\end{equation}
and 
\begin{equation}\label{psi-2}
\Psi(x)\vert_{x\to+\infty}\sim\psi_2(x):= e^{-2\pi \imu x^2 +2\pi \imu \cla x}.
\end{equation}
\end{proposition}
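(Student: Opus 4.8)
The plan is to extract the leading asymptotics by a WKB-type Gaussian ansatz, seeking $\Psi(x)\sim e^{ax^2+bx}$ as $x\to\pm\infty$ with $\log\Psi(x)-ax^2-bx=o(x)$. Such an ansatz is dictated by the structure of~\eqref{fun-eq-1}: under the imaginary shift $x\mapsto x\mp\imu\bb$ a quadratic exponent transforms as $a(x\mp\imu\bb)^2+b(x\mp\imu\bb)=ax^2+(b\mp2a\imu\bb)x+\mathrm{const}$, producing a purely linear-in-$x$ correction that can be arranged to cancel the factors $e^{\pm2\pi\bb x}$ multiplying $\Psi$ in the equation. First I would substitute the ansatz into~\eqref{fun-eq-1} and divide by the common factor $e^{ax^2}$. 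Since $E$ is a fixed constant whereas the three remaining contributions carry growing exponentials, the term $E\Psi(x)$ is subleading at large $|x|$, and the equation reduces to the demand that
\[
\Psi(x-\imu\bb),\qquad q^{-1}e^{-2\pi\bb x}\Psi(x+\imu\bb),\qquad e^{2\pi\bb x}\Psi(x)
\]
cancel to leading order; for the Gaussian ansatz their reduced linear exponents are $b-2a\imu\bb$, $b+2a\imu\bb-2\pi\bb$ and $b+2\pi\bb$, respectively.

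Next I would perform a dominant-balance analysis. Genuine cancellation of two of the three terms as $x$ runs to infinity requires equality of their full (complex) linear exponents, which already fixes $a$ to a purely imaginary value, while comparison of the real parts of all linear exponents identifies the relevant limit and verifies that the third term and the $E$-term are strictly subdominant. The three pairwise balances resolve as follows. Equating the exponents of $\{\Psi(x-\imu\bb),\,e^{2\pi\bb x}\Psi(x)\}$ gives $a=\pi\imu$; the remaining term then has the smallest real exponent, hence is subdominant as $x\to+\infty$, yielding the regime~\eqref{psi-1}. The pair $\{q^{-1}e^{-2\pi\bb x}\Psi(x+\imu\bb),\,e^{2\pi\bb x}\Psi(x)\}$ gives $a=-2\pi\imu$ and is likewise consistent only as $x\to+\infty$, yielding~\eqref{psi-2}. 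Finally $\{\Psi(x-\imu\bb),\,q^{-1}e^{-2\pi\bb x}\Psi(x+\imu\bb)\}$ gives $a=-\pi\imu/2$, but now $e^{2\pi\bb x}\Psi(x)$ has the largest real exponent as $x\to+\infty$, so this balance is consistent only as $x\to-\infty$, yielding~\eqref{psi-0}. Thus precisely the two quadratic coefficients $\pi\imu$ and $-2\pi\imu$ occur at $+\infty$ and the single coefficient $-\pi\imu/2$ at $-\infty$, matching the three stated forms; the purely imaginary values guarantee $|e^{ax^2}|=1$, so decay is governed by the linear term and these are exactly the square-integrable candidates.

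It remains to fix $b$, and this is where the principle of F-duality is indispensable and where the main difficulty lies. Cancellation of the two balanced terms requires, beyond the equality of linear exponents used to fix $a$, cancellation of their constant parts: for the pairs containing $e^{2\pi\bb x}\Psi(x)$ this is a single relation of the form $e^{(\,\cdot\,)}=-1$, while for the remaining pair it takes the form $e^{w}+e^{-w}=0$. In each case the relation determines $b$ only up to an integer (respectively half-integer) branch ambiguity, i.e.\ a shift of $b$ by an integer multiple of $\pi\bar\bb$. The resolution is to impose the same ansatz on the F-dual equation~\eqref{fun-eq-2}, obtained by $\bb\leftrightarrow\bar\bb$, $q\leftrightarrow1/\bar q$: this produces a second determination of $b$ carrying its own ambiguity, a multiple of $\pi\bb$, and since $\bb$ and $\bar\bb$ are $\R$-linearly independent for $\theta\in\,]0,\pi/2[$, requiring the two determinations to agree fixes the integers uniquely (for instance the trivial branch in the case~\eqref{psi-1}). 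Solving, one obtains in every case the F-invariant combination $b\in\{2\pi\imu\cla,\,-\pi\imu\cla\}$ with $\cla=\tfrac{\imu}{2}(\bb+\bar\bb)$, recovering exactly $\psi_0,\psi_1,\psi_2$. The hard part is therefore not the balance bookkeeping, which is routine, but this branch selection: no single difference equation pins down the rate of the exponential prefactor, and it is only the compatibility of the two mutually F-dual equations that singles out the symmetric value proportional to $\cla$.
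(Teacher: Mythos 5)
Your proof is correct and is in substance the paper's own argument: the Gaussian ansatz $e^{ax^2+bx}$ with the three pairwise dominant balances is exactly the paper's analysis of the ratio $\rho_\lambda(x)=e^{\lambda x}\Psi(x-i\bb)/\Psi(x)$ under the correspondence $\lambda=2ai\bb$, your values $a\in\{\pi i,\,-2\pi i,\,-\pi i/2\}$ matching the paper's choices $\lambda\in\{-2\pi\bb,\,4\pi\bb,\,\pi\bb\}$ with the same subdominance checks identifying which limit ($x\to+\infty$ or $x\to-\infty$) each balance serves. Your fixing of $b$ by intersecting the two branch lattices (multiples of $\pi\bar\bb$ from \eqref{fun-eq-1} and of $\pi\bb$ from \eqref{fun-eq-2}, using the $\R$-linear independence of $\bb$ and $\bar\bb$ for $\theta\in\,]0,\pi/2[$) is simply a more explicit rendering of the paper's requirement that the limit equation $e^{\lambda x}f_\lambda(x-i\bb)/f_\lambda(x)=\mu(\lambda)$ admit an \emph{F-dual} solution, which is how the paper discards the branch $\epsilon=+1$ in the $x\to-\infty$ case.
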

\begin{proof}
 Dividing \eqref{fun-eq-1} by $\Psi(x)$, and denoting $\rho_\lambda(x):=e^{\lambda x}\Psi(x-i\bb)/\Psi(x)$, we obtain a first order non-linear  finite difference functional equation with exponentially growing or decaying coefficients
 \begin{equation}
\rho_\lambda(x)+e^{(2x+i\bb)(\lambda-\pi\bb)}/\rho_\lambda(x+i\bb)=(E-e^{2\pi\bb x})e^{x\lambda}.
\end{equation}
Let  $\lambda\in\C$ be such that there exists a finite non-zero limit value 
\begin{equation}
\mu(\lambda):=\lim_{x\to\infty}\rho_\lambda(x)=\lim_{x\to\infty}\rho_\lambda(x+i\bb),
\end{equation}
where $\infty$ means one of  $\pm\infty$, and there exists an F-dual solution of the finite difference functional equation
\begin{equation}\label{f-d-f-e}
e^{\lambda x}f_\lambda(x-i\bb)/f_\lambda(x)=\mu(\lambda).
\end{equation}
Then, the corresponding asymptotic behaviour of $\Psi(x)$ is given by $f_\lambda(x)$.

{\bf The case $x\to-\infty$}. Choosing $\lambda=\pi\bb$, we obtain
\begin{equation}
\lim_{x\to-\infty}\left(\rho_{\pi\bb}(x)+\frac1{\rho_{\pi\bb}(x+i\bb)}\right)=0.
\end{equation}
Thus, one has two possibilities for the limit value
\begin{equation}
\mu(\pi\bb)=\epsilon i,\quad \epsilon\in\{\pm1\}.
\end{equation}
The finite difference functional equation~\eqref{f-d-f-e}
admits an F-dual solution of the form $f_{\pi\bb}(x)=\psi_0(x)$ provided $\epsilon=-1$.

{\bf The case $x\to+\infty$}. We can choose either $\lambda=-2\pi\bb$ or $\lambda=4\pi\bb$ with the limit values
\begin{equation}
\mu(-2\pi\bb)=-1,\quad\mu(4\pi\bb)=-q^3.
\end{equation}
The corresponding  F-dual solutions of \eqref{f-d-f-e} are given by $f_{-2\pi\bb}(x)=\psi_1(x)$ and $f_{4\pi\bb}(x)=\psi_2(x)$.
\end{proof}

Our further goal is to try to solve the eigenvalue equations~\eqref{fun-eq-1} and  \eqref{fun-eq-2} in three asymptotic regimes \eqref{psi-0}--\eqref{psi-2} by using the substitutions $\Psi(x)=\psi_i(x)\varphi_i(x)$, $i\in\{0,1,2\}$, and looking for functions $\varphi_i(x)$ having finite non-zero limiting values in the corresponding asymptotic limits.

\section{Solutions in terms of power series}
\subsection{The case $\psi_2(x)$}\label{subsec:4.1} Assume that the asymptotic behaviour~\eqref{psi-2} takes place. 
If we write $\Psi(x)=\psi_2(x)\varphi_2(x)$, then the eigenvalue equation~\eqref{fun-eq-1}  is converted into the equation
\begin{equation}\label{f-e-fi-2}
\varphi_2(x+\imu\la)+e^{-6\pi \la x+3\pi\imu\la^2}\varphi_2(x-\imu \la)=(1-Ee^{-2\pi\la x})\varphi_2(x)
\end{equation}
which we complement with the limit value condition
\begin{equation}
\lim_{x\to+\infty}\varphi_2(x)=1.
\end{equation}
Under the F-dual substitution
\begin{equation}
\varphi_2(x)=\chi_2(e^{2\pi\la x})\bar\chi_2(e^{2\pi\bar\bb x}),
\end{equation}
\eqref{f-e-fi-2} is reduced to the equation 
\begin{equation}
\chi_2(zq^2)+(q/z)^3\chi_2(z/q^2)=(1-E/z)\chi_2(z)
\end{equation}
which we complement with the limit value condition 
\begin{equation}
\lim_{z\to\infty}\chi_2(z)=1.
\end{equation}
 It admits a power series solution 
 \begin{equation}
 \chi_2(z)=\phi_{q,E}(1/z). 
\end{equation}
The dual function $\bar\chi_2(z)$ is obtained by the substitutions $q\mapsto 1/\bar q$ and $E\mapsto \bar E$ with the result 
\begin{equation}
\bar\chi_2(z)=\phi_{1/\bar q,\bar E}(1/z). 
\end{equation}
Taking into account the remarks in the end of Subsection~\ref{subsec2.4}, we conclude that $\chi_2(z)$ is holomorphic in $\C_{\ne0}$ while $\bar\chi_2(z)$ does not converge to any complex analytic function.

\subsection{The case $\psi_1(x)$} Assume that the behaviour~\eqref{psi-1} takes place. 
If we write $\Psi(x)=\psi_1(x)\varphi_1 (x)$, then the eigenvalue equation~\eqref{fun-eq-1} is converted into the equation
\begin{equation}
\varphi_1(x-\imu\la)+e^{-6\pi \la x-3\pi\imu\la^2}\varphi_1(x+\imu \la)=(1-Ee^{-2\pi\la x})\varphi_1(x)
\end{equation}
which we complement with the limit value condition
\begin{equation}
\lim_{x\to+\infty}\varphi_1(x)=1.
\end{equation}
As in the previous case, we obtain a power series F-dual solution
\begin{equation}
\varphi_1(x)=\phi_{1/q,E}(e^{-2\pi\bb x})\phi_{\bar q,\bar E}(e^{-2\pi\bar\bb x})
\end{equation}
where the radii of convergence of the series
$\phi_{1/q,E}(z)$ and $\phi_{\bar q,\bar E}(z)$  are 
zero and infinity respectively. 
\subsection{The case $\psi_0(x)$}
The substitution  $\Psi(x)=\psi_0(x)\varphi_0 (x)$ converts  the eigenvalue equation~\eqref{fun-eq-1} into the equation
\begin{equation}\label{f-e-varfi-0}
\varphi_0 (x-i\bb)-\varphi_0 (x+i\bb)=i(E-e^{2\pi\bb x})e^{\pi\bb x}\varphi_0 (x)
\end{equation}
which we complement with the limit value condition
\begin{equation}
\lim_{x\to-\infty}\varphi_0(x)=1.
\end{equation}
Under the F-dual substitution
\begin{equation}
\varphi_0 (x)=\frac12\sum_{\sigma,\tau\in\{0,1\}}(-1)^{\sigma\tau}\chi_0((-1)^\sigma e^{\pi\bb x})\bar\chi_0((-1)^\tau e^{\pi\bar\bb x}),
\end{equation}
\eqref{f-e-varfi-0} is equivalent to the following functional equation
on the function $\chi_0(z)$:
\begin{equation}
\chi_0 (z/q)-\chi_0 (zq)=i(E-z^2)z\chi_0(z)
\end{equation}
complemented with the initial value condition 
\begin{equation}
\chi_0(0)=1.
\end{equation}
It admits a power series solution $\chi_0(z)=\psi_{q,E}(-iz)$ and its dual $\bar\chi_0(z)=\psi_{\bar q,\bar E}(iz)$ with infinite radii of convergence. Thus, in this case, we obtain an entire function $\Psi(x)$.

\section{Analytical realisations of the series $\phi_{1/q,E}(z)$}
\subsection{First order matrix difference equation}
For any $f\in F_{q,E}$, we have a matrix equality
\begin{equation}\label{matrec}
\hat f(z)=L(z)\hat f(zq^2),\quad \hat f(z):=
\begin{pmatrix}
 f(\frac z{q^2})\\
 f(z)
\end{pmatrix},\ 
L(z):=
\begin{pmatrix}
 1-Ez&-z^3q^3\\
 1&0
\end{pmatrix}.
\end{equation}
Defining
\begin{equation}\label{L_n(z)}
L_n(z):=L(z)L(zq^2)\cdots L(zq^{2n-2})=: 
\begin{pmatrix}
 a_n(z)&b_n(z)\\c_n(z)&d_n(z)
\end{pmatrix},\quad n\in\mathbb Z_{>0},
\end{equation}
we have
\begin{equation}
L_{m+n}(z)=L_m(z)L_n(zq^{2m}),\quad \forall(m,n)\in(\mathbb{Z}_{>0})^2,
\end{equation}
which, in particular, implies that
\begin{equation}\label{L_{n+1}(z)}
L_{n+1}(z)=L(z)L_n(zq^2)=L_n(z)L(zq^{2n}),\quad \forall n\in\mathbb{Z}_{>0}.
\end{equation}
Taking the limit $n\to\infty$ in \eqref{L_n(z)}, relations~\eqref{L_{n+1}(z)} imply that
\begin{equation}\label{L_infty(z)}
L_\infty(z):=\lim_{n\to\infty}L_n(z)=
\begin{pmatrix}
 \phi_{q,E}(z/q^2)&0\\ \phi_{q,E}(z)&0
\end{pmatrix}
\end{equation}
where $\phi_{q,E}$  is the regular element of $F_{q,E}$. As we have seen in Subsection~\ref{subsec2.4}, it can be presented as  the everywhere absolutely convergent series~\eqref{phi_{q,E}(z)}.

\subsection{Wronskian pairing} We define a skew-symmetric bilinear  \emph{\color{blue}Wronskian pairing}
\begin{equation}\label{wronskian}
[\cdot,\cdot]\colon F_{q,E}\times F_{q,E}\to T^3_{q^2,q^3},\quad [f,g](z)= f(\frac z{q^2})g(z)-g(\frac z{q^2})f(z).
\end{equation}
\begin{remark}
 One can show that $\dim F_{q,E}=3$. As the kernel of the Wronskian pairing $[\phi_{q,E},\cdot]$ contains $\phi_{q,E}$, we conclude that
  $\dim [\phi_{q,E},F_{q,E}]\le 2$.
\end{remark}
\subsection{Adjoint functions}
For any $f\in F_{q,E}$, we associate the \emph{\color{blue} adjoint function}
\begin{equation}
\tilde f\colon U([\phi_{q,E},f])\to \mathbb{C}
\end{equation}
defined by the formula
\begin{equation}
\tilde f(z):=\frac{f(z)}{[\phi_{q,E},f](z)}.
\end{equation}
By construction, $\tilde f(z)$ solves the functional equation
 \begin{equation}\label{dbfeff}
\tilde f(zq^2)+(z/q)^3\tilde f(z/q^2)=(1-E z)\tilde f(z)
\end{equation}
obtained from the equation underlying the vector space  $F_{q,E}$ by the replacement of $q$ by $1/q$. This equation admits a formal power series solution $\phi_{1/q,E}(z)$
which has zero radius of convergence. The adjoint functions appear to be analytic substitutes for $\phi_{1/q,E}$ due to the following theorem.
\begin{theorem}\label{thm1}
 Let $f\in F_{q,E}$ be such that the adjoint function $\tilde f(z)$ is a non-trivial meromorphic function. Then 
\begin{equation}\label{lim-tilde-psi}
\lim_{n\to \infty}\tilde f(zq^{2n})=1,\quad \forall z\in U([\phi_{q,E},f]),
\end{equation}
so that $\tilde f(z)$ admits an asymptotic expansion at small $z$ in the form of the series $\phi_{1/q,E}(z)$.
\end{theorem}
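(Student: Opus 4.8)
The plan is to read the limit \eqref{lim-tilde-psi} directly off the transfer-matrix factorisation \eqref{L_n(z)}--\eqref{L_infty(z)}, after noticing that the numerator and denominator of $\tilde f(zq^{2n})$ diverge at exactly the same rate. First I would iterate the matrix recursion \eqref{matrec} to $\hat f(z)=L_n(z)\hat f(zq^{2n})$ and invert it as $\hat f(zq^{2n})=L_n(z)^{-1}\hat f(z)$. Since $\det L(z)=z^3q^3$, the product \eqref{L_n(z)} has $\det L_n(z)=z^{3n}q^{3n^2}$, and extracting the lower component of $\hat f(zq^{2n})$ gives
\[
f(zq^{2n})=\frac{a_n(z)f(z)-c_n(z)f(z/q^2)}{z^{3n}q^{3n^2}}.
\]

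Next I would observe that the Wronskian $W:=[\phi_{q,E},f]\in T^3_{q^2,q^3}$ obeys $q^3z^3W(zq^2)=W(z)$, so iterating yields $W(zq^{2n})=W(z)/(z^{3n}q^{3n^2})$, the very same factor $z^{3n}q^{3n^2}$. Forming the quotient, this factor cancels and leaves
\[
\tilde f(zq^{2n})=\frac{f(zq^{2n})}{W(zq^{2n})}=\frac{a_n(z)f(z)-c_n(z)f(z/q^2)}{W(z)}.
\]
Finally I would pass to the limit using \eqref{L_infty(z)}, i.e.\ $a_n(z)\to\phi_{q,E}(z/q^2)$ and $c_n(z)\to\phi_{q,E}(z)$; the numerator then converges to $\phi_{q,E}(z/q^2)f(z)-\phi_{q,E}(z)f(z/q^2)=W(z)$, so that $\tilde f(zq^{2n})\to W(z)/W(z)=1$.

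For this to make sense I would invoke the hypothesis that $\tilde f$ is a non-trivial meromorphic function, which amounts to $W\not\equiv0$; then $W$, being a holomorphic theta-function, has only isolated zeros, and for $z\in U(W)$ the whole orbit $zq^{2\Z}$ stays in $U(W)$, so every $W(zq^{2n})\neq0$ and the quotient above is legitimate. The asymptotic-expansion claim would then follow by feeding $\lim_{n\to\infty}\tilde f(zq^{2n})=1$, which is the constant term of $\phi_{1/q,E}$, back into the functional equation \eqref{dbfeff} satisfied by $\tilde f$ and solving order by order. The main point, rather than a genuine analytic obstacle, is the exact matching of two divergences: both $f(zq^{2n})$ and $W(zq^{2n})$ blow up like $z^{-3n}q^{-3n^2}$, and the content of the statement is that $\det L_n(z)$ coincides with the theta quasi-period of $W$ while the surviving numerator is precisely $W(z)$. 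The convergence $L_n\to L_\infty$ established earlier is exactly what supplies this last identification, so no further estimate is needed.
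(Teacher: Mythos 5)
Your proposal is correct and follows essentially the same route as the paper's own proof: inverting $\hat f(z)=L_n(z)\hat f(zq^{2n})$ via $\det L_n(z)=z^{3n}q^{3n^2}$, cancelling this factor against the quasi-periodicity $[\phi_{q,E},f](zq^{2n})=[\phi_{q,E},f](z)/(z^{3n}q^{3n^2})$, and passing to the limit with $a_n\to\phi_{q,E}(z/q^2)$, $c_n\to\phi_{q,E}(z)$ so that the numerator becomes exactly the Wronskian. Your added remarks (deriving the quasi-periodicity from $W\in T^3_{q^2,q^3}$, the orbit $zq^{2\Z}\subset U(W)$, and the order-by-order matching with $\phi_{1/q,E}$ via \eqref{dbfeff}) are consistent elaborations of steps the paper leaves implicit.
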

\begin{proof}
 The proof is based on the matrix recurrence~\eqref{matrec}. Indeed, the formula
\begin{equation}
\det (L_n(z))=z^{3n} q^{3n^2}
\end{equation}
implies that $L_n(z)$ is invertible for any $z\ne0$, and we can write
\begin{equation}
\hat f(z)=L_n(z)\hat f(zq^{2n})\Leftrightarrow \hat f(zq^{2n})=(L_n(z))^{-1}\hat f(z)
\end{equation}
so that
\begin{equation}
f(zq^{2n})=\frac{a_n(z)f(z)-c_n(z)f(z/q^2)}{z^{3n}q^{3n^2}},
\end{equation}
and, taking into account the equality
\begin{equation}
[\phi_{q,E},f](zq^{2n})=\frac{[\phi_{q,E},f](z)}{z^{3n}q^{3n^2}},
\end{equation}
we obtain
\begin{equation}
\tilde f(zq^{2n})=\frac{a_n(z) f(z)-c_n(z) f(z/q^2)}{[\phi_{q,E}, f](z)}
\end{equation}
which implies \eqref{lim-tilde-psi} due to the formulae
\begin{equation}
\lim_{n\to\infty}a_n(z)=\phi_{q,E}(z/q^2),\quad \lim_{n\to\infty}c_n(z)=\phi_{q,E}(z),
\end{equation}
see \eqref{L_infty(z)}, and the definition of the Wronskian pairing in \eqref{wronskian}.
\end{proof}
Our next task is to construct elements of $F_{q,E}$ with non-trivial adjoint functions. 
\section{Construction of elements in $F_{q,E}$}
\subsection{The vector space $V_ {q,\alpha,E}$}
Let $\alpha\in\C_{\ne0}$. We define a vector space
\begin{equation}
V_ {q,\alpha,E}:=\{f\in H(\C_{\ne0})\mid \alpha zf(z/q^2)+z^2q\alpha^{-1}f(zq^2)=(1-Ez)f(z)\}.
\end{equation}

\begin{proposition}\label{V=W}
 Let $g\in G_{q,E}$.  Consider the linear map 
 \begin{equation}
A_{g}\colon H(\C_{\ne0})\to H(\C_{\ne0}),\quad A_g(f)(z)=P_+(fg)(1/\sqrt{-z}) 
\end{equation}
where $P_+$ is the projection to the even part of a function:
\begin{equation}
P_+(f)(z)=(f(z)+f(-z))/2.
\end{equation}
Then $ A_{g}(T^1_{q,-\alpha})\subset V_{q,\alpha,E}$ and the restriction $A_{g}\vert_{T^1_{q,-\alpha}}$ is a linear isomorphism between $T^1_{q,-\alpha}$ and $V_{q,\alpha,E}$ provided $\omega(g)\ne0$ (see Remark~\ref{gg+gg=2}).
\end{proposition}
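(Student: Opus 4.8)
The plan is to verify the two assertions separately: first that $A_g$ maps $T^1_{q,-\alpha}$ into $V_{q,\alpha,E}$, and then that this map is a bijection onto $V_{q,\alpha,E}$ under the hypothesis $\omega(g)\ne0$. For the inclusion, I would take $f\in T^1_{q,-\alpha}$, so that $f$ satisfies $-\alpha z f(zq)=f(z)$, and $g\in G_{q,E}$, so that $g(zq)-g(z/q)=z(z^2+E)g(z)$. The function $h:=fg$ then satisfies a functional equation inherited from these two relations, and I would compute how the even part $P_+(h)$ transforms under $z\mapsto zq^2$ and $z\mapsto z/q^2$ (coming from applying the shift twice, which is natural since $P_+$ kills the odd part and the product $fg$ lives in a twisted theta-space of order $2$). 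After substituting $z\mapsto 1/\sqrt{-z}$, the goal is to match exactly the defining relation of $V_{q,\alpha,E}$, namely $\alpha z F(z/q^2)+z^2 q\alpha^{-1}F(zq^2)=(1-Ez)F(z)$ with $F=A_g(f)$. This is essentially a bookkeeping computation tracking how the $G_{q,E}$-relation for $g$ and the $T^1_{q,-\alpha}$-relation for $f$ combine under the even-projection and the square-root change of variable.

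For the isomorphism claim, I would argue by comparing dimensions and establishing injectivity. By the third remark in the excerpt, $\dim T^1_{q,-\alpha}=1$ (since $|q|<1$), so $T^1_{q,-\alpha}$ is one-dimensional, spanned by the theta-function $\vartheta(\alpha z;q)$. Hence it suffices to show that $A_g$ restricted to this line is nonzero precisely when $\omega(g)\ne0$, and that $V_{q,\alpha,E}$ is also at most one-dimensional so that a nonzero map is automatically surjective. To see $A_g(f)\ne0$, I would expand $A_g(f)(z)=P_+(fg)(1/\sqrt{-z})$ and relate its value (or a leading coefficient) to the quadratic form $\omega(g)$ from Remark~\ref{gg+gg=2}. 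The key observation is that the even part of $g(z)g(-zq)$ is the constant $\omega(g)$; choosing $f$ to be a suitable theta-function in $T^1_{q,-\alpha}$, the product $fg$ evaluated and symmetrised should reproduce $\omega(g)$ (up to an explicit nonvanishing factor) as the relevant coefficient, so that $A_g(f)$ vanishes identically if and only if $\omega(g)=0$.

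To control $\dim V_{q,\alpha,E}$, I would expand a putative solution $F\in V_{q,\alpha,E}$ into a Laurent series and feed it into the three-term relation $\alpha z F(z/q^2)+z^2 q\alpha^{-1}F(zq^2)=(1-Ez)F(z)$. Reading off coefficients should give a recurrence that, as in the second remark's treatment of $F_{p,c}$ and $G_{p,c}$, pins down the solution space up to a small number of free parameters; the aim is to show this dimension is at most $1$, matching $\dim T^1_{q,-\alpha}$. Combined with injectivity on the nonzero line, this forces $A_g\vert_{T^1_{q,-\alpha}}$ to be an isomorphism.

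I expect the main obstacle to be the inclusion step, specifically the change of variable $z\mapsto 1/\sqrt{-z}$ combined with the even projection. One must check that $P_+(fg)$, which a priori lives in a twisted theta-space of order $2$ in the variable $z$, becomes a single-valued holomorphic function of $z$ after the square-root substitution (the evenness is exactly what guarantees this), and then that the order-$2$ quasi-periodicity transmutes into the correct coefficients $\alpha z$, $z^2 q\alpha^{-1}$, and $(1-Ez)$ of the $V_{q,\alpha,E}$-relation. Getting the powers of $q$ and the factors of $\alpha$ to land precisely is delicate, and I would organise this by writing the action of the shift $z\mapsto zq$ on $fg$ explicitly using both defining relations before projecting and substituting.
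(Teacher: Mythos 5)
Your plan for the inclusion $A_g(T^1_{q,-\alpha})\subset V_{q,\alpha,E}$ is essentially the paper's: setting $u=-1/z^2$ and combining the relation $g(zq)-g(z/q)=z(z^2+E)g(z)$ with $h(z)=-\alpha z\,h(zq)$ under the even projection is exactly the computation the authors carry out, and your observation that evenness is what makes the substitution $z\mapsto 1/\sqrt{-z}$ single-valued is the right point. The problem is your second half. First, the bound $\dim V_{q,\alpha,E}\le 1$ does \emph{not} follow from reading off Laurent coefficients: writing $F(z)=\sum_n a_n z^n$, the defining relation of $V_{q,\alpha,E}$ gives the second-order bilateral recurrence
\begin{equation*}
a_n=(E+\alpha q^{2-2n})\,a_{n-1}+\alpha^{-1}q^{2n-3}\,a_{n-2},
\end{equation*}
whose formal solution space is exactly two-dimensional (any $(a_0,a_1)$ extends uniquely in both directions), so coefficient counting only yields $\dim V_{q,\alpha,E}\le 2$ — just as the analogous count gives $\le 3$, not the true value, for $F_{p,c}$ and $G_{p,c}$. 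Cutting $2$ down to $1$ requires a genuine convergence/dichotomy argument as $n\to+\infty$: the coefficient $\alpha q^{2-2n}$ blows up, producing a dominant branch $|a_n|\sim|q|^{-n^2}$ (zero radius of convergence) against a recessive branch $|a_n|\sim|q|^{n^2/2}$, and one must show the convergent solutions form the one-dimensional recessive subspace — a Perron--Kreuser-type asymptotic analysis that your sketch does not supply. Second, your nonvanishing step is left as a hope: $\omega(g)$ is the constant even part of the \emph{specific} product $g(z)g(-zq)$, not of $h(z)g(z)$, so it is not a coefficient of $hg$ and no "leading coefficient" of $A_g(h)$ visibly reproduces it; your stronger claim that $A_g(h)\equiv 0$ if and only if $\omega(g)=0$ is also unsupported (and not what the proposition asserts).

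The paper sidesteps both difficulties with a single pairing identity, which is where $\omega(g)\ne0$ actually enters. With $f=A_g(h)$, $h\in T^1_{q,-\alpha}$ and $u=-1/z^2$, using $h(z)+\alpha z\,h(zq)=0$ and $\alpha z\,h(-zq)=h(-z)$ one finds
\begin{equation*}
2f(u)\,g(zq)+2\alpha z\,g(z)\,f(u/q^2)=h(-z)\bigl(g(-z)g(zq)+g(z)g(-zq)\bigr)=h(-z)\,\omega(g),
\end{equation*}
so that for $\omega(g)\ne0$ the preimage is recovered explicitly,
\begin{equation*}
h(z)=\frac{2}{\omega(g)}\Bigl(f(-1/z^2)\,g(-zq)-\alpha z\,g(-z)\,f\bigl(-1/(zq)^2\bigr)\Bigr),
\end{equation*}
giving injectivity and an explicit inverse on $V_{q,\alpha,E}$ at once, with no dimension count for $V_{q,\alpha,E}$ needed (the corollary $\dim V_{q,\alpha,E}=1$ is then a \emph{consequence}). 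To repair your proof you should either derive this identity — your dimension-comparison architecture then becomes unnecessary — or honestly carry out the asymptotic analysis of the recurrence above, which is considerably more work than the remark-style Laurent argument you appeal to.
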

\begin{proof} Let $h\in T^1_{q,-\alpha}$ and $f:=A_g(h)$.
Denoting $u:=-1/z^2$, we have
 \begin{multline}
2u\alpha f(u/q^2)=-z^{-2}\alpha h(zq)g(zq)+(z\mapsto -z)\\
=h(z)z^{-3}(g(z/q)+z(z^2+E)g(z))+(z\mapsto -z)\\
=(-h(z/q)(\alpha z/q)^{-1}z^{-3}g(z/q)+(z\mapsto -z))+(h(z)(1+E/z^{2})g(z)+(z\mapsto -z))\\
=-u^2q\alpha^{-1} 2f(uq^2)+(1-Eu)2f(u).
\end{multline}
Thus, $f\in V_{q,\alpha,E}$. 

Assuming $\omega(g)\ne0$, we solve the equality $f=A_g(h)$ for $h$ as follows:
\begin{multline}
2f(u)g(zq)+z\alpha g(z)2f(u/q^2)\\
=(h(z)+z\alpha h(zq))g(z)g(zq)+h(-z)g(-z)g(zq)+z\alpha g(z)h(-zq)g(-zq)\\
=h(-z)(g(-z)g(zq)+g(z)g(-zq))=h(-z)\omega(g).
\end{multline}
Thus, $h$  is determined  through $f$:
 \begin{equation}
h(z)=(f(-1/z^2)g(-zq)-z\alpha g(-z)f(-1/(zq)^2))2/\omega(g).
\end{equation}
\end{proof}
\begin{corollary}
For any $\alpha\in\C_{\ne0}$ and $E\in\C$, one has $\dim(V_ {q,\alpha,E})=\dim(T^1_{q,-\alpha})=1$. In particular, the function
\begin{equation}
\chi_{q,\alpha,E}(z):=\vartheta(\alpha z;q)\psi_{q,E}(z)+(z\mapsto-z)
\end{equation}
determines a basis element in $V_ {q,\alpha,E}$.
\end{corollary}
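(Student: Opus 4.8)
The plan is to deduce the dimension statement directly from Proposition~\ref{V=W} and then to recognise the advertised function as the image of the canonical generator of $T^1_{q,-\alpha}$ under the isomorphism supplied there. The only genuine inputs needed are the numerical fact $|q|<1$ and a judicious choice of $g\in G_{q,E}$ with $\omega(g)\neq0$.

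First I would record that $|q|<1$. Writing $\bb=e^{\imu\theta}$, one has $\pi\imu\bb^2=-\pi\sin(2\theta)+\pi\imu\cos(2\theta)$, whence $|q|=e^{-\pi\sin(2\theta)}<1$ since $2\theta\in\,]0,\pi[$. In particular $q^k\neq1$ for every $k\neq0$, i.e. $1\notin q^{\Z_{\neq0}}$, so the hypotheses ensuring $T^0_{q,1}=\C$ and the well-definedness of the quadratic form $\omega$ in Remark~\ref{gg+gg=2} hold. Moreover, by the remark on the dimensions of the spaces $T^m_{p,r}$, the condition $|q|<1$ gives $\dim T^1_{q,-\alpha}=1$, a basis element being $\vartheta(-rz;p)$ evaluated at $p=q$, $r=-\alpha$, namely $\vartheta(\alpha z;q)$.

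Next I would produce an admissible $g$. The natural choice is $g=\psi_{q,E}$: since $|q|\neq1$, the regular element $\psi_{q,E}$ lies in $G_{q,E}$, and Remark~\ref{gg+gg=2} gives $\omega(\psi_{q,E})=2\psi_{q,E}(0)=2\neq0$, using that the constant term of the series is $p_0=1$. Proposition~\ref{V=W} then applies and shows that $A_{\psi_{q,E}}$ restricts to a linear isomorphism $T^1_{q,-\alpha}\to V_{q,\alpha,E}$; combined with the previous paragraph this yields $\dim V_{q,\alpha,E}=\dim T^1_{q,-\alpha}=1$, the first assertion.

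Finally I would compute the image of the generator $h(z)=\vartheta(\alpha z;q)$. Unfolding $A_g(h)(z)=P_+(hg)(1/\sqrt{-z})$ with $g=\psi_{q,E}$ gives $A_{\psi_{q,E}}(h)(z)=\tfrac12\,\chi_{q,\alpha,E}(1/\sqrt{-z})$, so that under the change of variables $u=-1/z^2$ relating the $T^1$-presentation to the $V$-presentation (precisely the $u$ introduced in the proof of Proposition~\ref{V=W}), the function $\chi_{q,\alpha,E}$ is a nonzero scalar multiple of the image of $\vartheta(\alpha z;q)$. Since $A_{\psi_{q,E}}$ is injective and $\vartheta(\alpha z;q)\neq0$, this image is nonzero, and being a nonzero element of the one-dimensional space $V_{q,\alpha,E}$ it is a basis element, as claimed. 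I expect the main (though modest) obstacle to be the bookkeeping in this last step: one must carry the even-part projection $P_+$ through the substitution $1/\sqrt{-z}$ so that the symmetrised product $\vartheta(\alpha z;q)\psi_{q,E}(z)+(z\mapsto-z)$ matches the image exactly; everything else is an immediate appeal to the preceding remarks and to Proposition~\ref{V=W}.
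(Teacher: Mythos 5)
Your proposal is correct and takes essentially the same route as the paper's own one-line proof: apply Proposition~\ref{V=W} with $g=\psi_{q,E}$, using $\omega(\psi_{q,E})=2\psi_{q,E}(0)=2\neq0$ together with the fact that $\vartheta(\alpha z;q)$ is a basis element of $T^1_{q,-\alpha}$ (valid since $|q|=e^{-\pi\sin 2\theta}<1$). Your additional bookkeeping --- the explicit computation $A_{\psi_{q,E}}\bigl(\vartheta(\alpha\,\cdot\,;q)\bigr)(z)=\tfrac12\chi_{q,\alpha,E}(1/\sqrt{-z})$, the change of variables $u=-1/z^2$, and the injectivity argument showing the image is nonzero --- merely spells out what the paper compresses into the word ``determines''.
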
 
\begin{proof}
 Indeed, $\psi_{q,E}$ is an element of $G_{q,E}$ with $\omega(\psi_{q,E})=2\ne0$ while $\vartheta(\alpha z;q)$ determines a basis element in $T^1_{q,-\alpha}$.
 \end{proof}
\begin{remark}
In the proof of the second part of Proposition~\ref{V=W}, we implicitly used an extension of the Wronskian pairing
\begin{equation}
[\cdot,\cdot]\colon G_{q,E}\times V_{q,\alpha,E}\to T^1_{q,\alpha},\quad [g,f](z)=g(zq)f(\frac{-1}{z^2})+z\alpha f(\frac{-1}{z^2q^2})g(z)
\end{equation}
and the identity
\begin{equation}
[g,A_g(h)](z)=h(-z)\omega(g)/2,\quad \forall (g,h,z)\in G_{q,E}\times T^1_{q,-\alpha}\times\C_{\ne0}.
\end{equation}

\end{remark}
\begin{proposition}
 The multiplication of functions induces a  linear map
 \begin{equation}
V_{q,\alpha,E}\otimes T^1_{q^2,q^2\alpha}\to F_{q,E}.
\end{equation}
\end{proposition}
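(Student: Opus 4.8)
The plan is to show that if $f\in V_{q,\alpha,E}$ and $\tau\in T^1_{q^2,q^2\alpha}$, then the product $f\tau$ lies in $F_{q,E}$. The strategy is purely computational: verify that $f\tau$ satisfies the defining functional equation of $F_{q,E}$, namely $(f\tau)(z/q^2)+(zq)^3(f\tau)(zq^2)=(1-Ez)(f\tau)(z)$, by using the functional equations satisfied separately by $f$ and $\tau$. First I would record the two input relations. From $f\in V_{q,\alpha,E}$ we have $\alpha z f(z/q^2)+z^2q\alpha^{-1}f(zq^2)=(1-Ez)f(z)$, and from $\tau\in T^1_{q^2,q^2\alpha}$ we have $q^2\alpha z\,\tau(zq^2)=\tau(z)$, equivalently $\tau(z/q^2)=\alpha z\,\tau(z)$ and $\tau(zq^2)=\tau(z)/(q^2\alpha z)$.

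The key step is to rewrite the target equation. Substituting the theta-shift relations, $(f\tau)(z/q^2)=f(z/q^2)\tau(z/q^2)=\alpha z\,f(z/q^2)\tau(z)$, and $(f\tau)(zq^2)=f(zq^2)\tau(zq^2)=f(zq^2)\tau(z)/(q^2\alpha z)$. Hence the left-hand side of the $F_{q,E}$ equation becomes
\begin{equation}
(f\tau)(z/q^2)+(zq)^3(f\tau)(zq^2)=\tau(z)\!\left(\alpha z\,f(z/q^2)+\frac{z^3q^3}{q^2\alpha z}f(zq^2)\right)=\tau(z)\!\left(\alpha z\,f(z/q^2)+z^2q\alpha^{-1}f(zq^2)\right).
\end{equation}
The bracket is exactly the left-hand side of the $V_{q,\alpha,E}$ relation, so it equals $(1-Ez)f(z)$, and the whole expression collapses to $(1-Ez)f(z)\tau(z)=(1-Ez)(f\tau)(z)$, which is what we want. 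This confirms $f\tau\in F_{q,E}$ as a pointwise identity on $\C_{\ne0}$.

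The remaining steps are to check that the map is well-defined and linear. Since $f$ and $\tau$ are holomorphic on $\C_{\ne0}$, so is their product, giving membership in $H(\C_{\ne0})$; bilinearity of multiplication then factors through the tensor product to yield a linear map $V_{q,\alpha,E}\otimes T^1_{q^2,q^2\alpha}\to F_{q,E}$. I do not anticipate a genuine obstacle here: the content is entirely the bookkeeping of the two power-of-$q$ shifts, and the only care needed is to track the exponents so that the factor $(zq)^3=z^3q^3$ combines with $\tau(zq^2)=\tau(z)/(q^2\alpha z)$ to reproduce precisely the coefficient $z^2q\alpha^{-1}$ appearing in the $V$-relation. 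The matching of these coefficients — which hinges on the specific index $q^2\alpha$ in $T^1_{q^2,q^2\alpha}$ — is the crux, and is exactly what dictates the value of $r$ in the theta-function space for the statement to hold.
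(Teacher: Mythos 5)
Your proof is correct and is essentially the paper's own argument: both verify the defining equation of $F_{q,E}$ for the product by combining the theta-shift relation $q^2\alpha z\,\tau(zq^2)=\tau(z)$ (equivalently $\tau(z/q^2)=\alpha z\,\tau(z)$) with the $V_{q,\alpha,E}$ relation, the only cosmetic difference being that you factor out $\tau(z)$ from the whole left-hand side while the paper starts from $h(z/q^2)$ and substitutes step by step.
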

\begin{proof}
 Let $f\in V_{q,\alpha,E}$, $g\in T^1_{q^2,q^2\alpha}$ and $h:=fg$. We have
\begin{multline}
 h(z/q^2)=(1-Ez)f(z)g(z)-z^2q\alpha^{-1}f(zq^2)g(z)\\=(1-Ez)h(z)-(zq)^3h(zq^2).
\end{multline}
Thus, $h\in F_{q,E}$.
\end{proof}
\subsection{Adjoint functions revisited}

 Let $f\in V_{q,\alpha,E}$, $g\in T^1_{q^2,q^2\alpha}$. Then, the adjoint function of the product $fg$ takes the form
 \begin{equation}\label{tilde-fg}
\widetilde{fg}(z)=\frac{f(z)g(z)}{[\phi_{q,E},fg](z)}=\frac{f(z)}{[\phi_{q,E},f](z)}
\end{equation}
where, in the last expression, by abuse of notation, we extend the Wronskian pairing to include the space $V_{q,\alpha,E}$ 
\begin{equation}
[\cdot,\cdot]\colon F_{q,E}\times V_{q,\alpha,E}\to T^2_{q^2,q/\alpha},\quad
 [e,f](z)=e(\frac z{q^2})f(z)-\alpha z f(\frac z{q^2}) e(z).
\end{equation}

The inclusions of vector spaces in \eqref{incl-Tm-Tmn} specified to $T^2_{q^2,q/\alpha}$ become
\begin{equation}
T^2_{q^2,q/\alpha}\subset T^{2n}_{q^{2n},q^{n^2}/\alpha^n},\quad \forall n\in\Z,
\end{equation}
which imply that 
\begin{equation}
[\phi_{q,E},f](z)=0\Rightarrow [\phi_{q,E},f](zq^{2n})=0,\quad \forall n\in\Z,
\end{equation}
and one has the equalities
\begin{equation}\label{f-at-roots}
\check f(zq^{2n})=q^{n^2}(\alpha qz)^n\check f(z), \quad \forall (n,z)\in \Z\times [\phi_{q,E},f]^{-1}(0)\cap U(\phi_{q,E}),
\end{equation}
where
\begin{equation}
\check f:=f/\phi_{q,E}.
\end{equation}

By adjusting the normalisation of $f$, we can write an equality
\begin{equation}
[\phi_{q,E},f](z)=\vartheta(z/s;q^2)\vartheta(zsq/\alpha;q^2),\quad\forall z\in\C_{\ne0},
\end{equation}
where $s=s(\alpha,q,E)$ is  a fixed zero of $[\phi_{q,E},f]$. We conclude that 
\begin{equation}\label{eq-2-sets-poles}
[\phi_{q,E},f](z)=0\Leftrightarrow [\phi_{q,E},f](\frac\alpha{qz})=0.
\end{equation}

\section{Solution of the Schr\"odinger equations}
Based on two possible asymptotics at $x\to+\infty$, the most general Ansatz for the common eigenfunction of our spectral problem is of the form
\begin{equation}\label{Ansatz}
\Psi(x)=\Psi_1(x)+\Xi'\Psi_2(x)
\end{equation}
where 
\begin{equation}
\Psi_1(x):=\psi_1(x) \tilde h(e^{-2\pi \la x})\phi_{\bar q,\bar E}(e^{-2\pi\bar\bb x}),\quad h\in F_ {q,E}\setminus \C\phi_{q,E},
\end{equation}
\begin{equation}
 \Psi_2(x):=\psi_2(x) \phi_{q,E}(e^{-2\pi\la x})\tilde{\bar h}(e^{-2\pi\bar\bb x}), \quad \bar h\in F_{\bar q,\bar E}\setminus \C\phi_{\bar q,\bar E},
\end{equation}
and $\Xi'\in \C$. 
\begin{remark}
As the bar-operation is eventually the complex conjugation,  the two functions are related as follows
\begin{equation}
\overline{\Psi_1(x)}=e^{\pi i x^2}\Psi_2(x),\quad \forall x\in \R.
\end{equation}
That implies that if $|\Xi'|=1$, then
\begin{equation}
\Xi'\overline{\Psi(x)}=e^{\pi i x^2}\Psi(x),\quad \forall x\in \R.
\end{equation}

\end{remark}

An important additional condition for this Ansatz,  to be called {\color{blue}\emph{Requirement(I)}}, is that the functions $\Psi_1(x)$ and $\Psi_2(x)$ should share one and the same set of poles.

If one chooses $h=f g$ and $\bar h=\bar f \bar g$, where 
\begin{equation}
(f,g)\in V_{q,\alpha,E}\times T^1_{q^2,q^2\alpha}\ \text{ and }\ (\bar f,\bar g)\in V_{\bar q,\bar\alpha,\bar E}\times T^1_{\bar q^2,\bar q^2\bar\alpha}
\end{equation}
for some $\alpha,\bar\alpha\in \C_{\ne0}$, then the denominators simplify to 
$
[\phi_{q,E},f]
$
and 
$
[\phi_{\bar q,\bar E},\bar f]$ (see eq.~\eqref{tilde-fg}) to become elements of 2-dimensional vector spaces $T^2_{q^2,q/\alpha}$ and $T^2_{\bar q^2,\bar q/\bar\alpha}$ respectively. By adjusting the normalisations of $f$ and $\bar f$, we can assume that
\begin{equation}
[\phi_{q,E},f](z)=\vartheta(\frac zs;q^2)\vartheta(\frac{zsq}\alpha;q^2),\quad [\phi_{\bar q,\bar E},\bar f](z)=\vartheta(\frac z{\bar s};\bar q^2)\vartheta(\frac{z\bar s \bar q}{\bar\alpha};\bar q^2)
\end{equation}
where 
\begin{equation}\label{s-in-terms-of-E}
s=s(\alpha,q,E),\quad \bar s=\bar s(\bar\alpha,\bar q,\bar E)
\end{equation}
are some chosen zeros of $[\phi_{q,E},f]$ and $[\phi_{\bar q,\bar E},\bar f]$ respectively. 
\begin{remark}
By solving equations~\eqref{s-in-terms-of-E} for $E=E(\alpha,q,s)$ and $\bar E=\bar E(\bar\alpha,\bar q,\bar s)$, one can think of $s$ and $\bar s$ as independent variables.
\end{remark}

In order to fulfil the Requirement(I), to rewrite $[\phi_{q,E},f](z)$ by using
the substitutions 
\begin{equation}\label{subs-al-be}
z\mapsto e^{-2\pi\bb x},\quad\alpha\mapsto qe^{-2\pi \bb \zeta},\quad s\mapsto e^{-2\pi \bb \sigma}
\end{equation}
and the modularity property~\eqref{mod-prop}
\begin{multline}
[\phi_{q,E},f](z)\mapsto \vartheta(e^{2\pi\bb(\sigma-x)};q^2)\vartheta(e^{2\pi\bb(\zeta-\sigma-x)};q^2)\\
=\vartheta(e^{2\pi\bar\bb(\sigma-x)};\bar q^2)\vartheta(e^{2\pi\bar\bb(\zeta-\sigma-x)};\bar q^2)e^{2\pi i((x-\sin\theta-\frac12\zeta)^2+(\sigma-\frac12\zeta)^2)}\frac i{\bb^{2}}.
\end{multline}
Thus, the Requirement(I) is fulfilled if we substitute in $[\phi_{\bar q,\bar E},\bar f]$
\begin{equation}\label{subs-bal-bbe}
 \bar\alpha\mapsto \bar qe^{-2\pi \bar\bb \zeta},\quad \bar s\mapsto e^{-2\pi \bar\bb \sigma}.
\end{equation}

Putting everything together, we obtain
\begin{equation}
\Psi(x)=e^{2\pi ic_\bb x}\frac{e^{\pi i x^2}f(z)\phi_{\bar q,\bar E}(\bar z)+\Xi e^{-2\pi i(\zeta+2\sin\theta)x}\bar f(\bar z)\phi_{q,E}(z)}{\vartheta(\frac z s;q^2)\vartheta(\frac{zs q}{\alpha};q^2)}
\end{equation}
with substitutions~\eqref{subs-al-be}, \eqref{subs-bal-bbe} and $\bar z\mapsto e^{-2\pi\bar\bb x}$,  and the parameter
\begin{equation}
\Xi:=\Xi'ie^{2\pi i((\sin\theta+\zeta/2)^2+(\sigma-\zeta/2)^2)-2i\theta}
\end{equation}
that we choose by the condition of cancellation of the pole of $\Psi(x)$ at $x=\sigma$. The result reads
\begin{equation}\label{Xi-fix}
\Xi=\Xi(\zeta,\theta,\sigma):=-e^{\pi i\sigma(\sigma+2\zeta)}\frac{\check f(s)\bar s}{\check {\bar f}(\bar s) s}.
\end{equation}
\begin{remark}\label{lattice-of-zeros}
Equality~\eqref{Xi-fix} implies that all the poles of $\Psi(x)$ at $\sigma+i\bb m+i\bb n$, $m,n\in\Z$, are cancelled as well. The proof is based on the relations~\eqref{f-at-roots} and their complex conjugate counterparts.
\end{remark}

Now, the last step in our solution is to fulfil the  {\color{blue}\emph{Requirement(II)}} which consists of cancelling the remaining poles of $\Psi(x)$. Due to the equivalence~\eqref{eq-2-sets-poles} and the Remark~\ref{lattice-of-zeros}, the Requirement(II) boils down to a single equation
\begin{equation}
\Xi(\zeta,\theta,\sigma)=\Xi(\zeta,\theta,\zeta-\sigma).
\end{equation}
which determines a discrete set of solutions for the variable $\sigma$, while the corresponding eigenvalues of the Hamiltonian are given through the implicit
 function $E=E(\zeta,\theta,\sigma)$. Given the fact that the parameter $\zeta$ is an auxiliary one, we conjecture that the eigenvalues of the Hamiltonian as well as the eigenvectors are independent of $\zeta$. This is confirmed by numerical calculations.

\subsection*{Acknowledgements} We would like thank Vladimir Bazhanov, Vladimir Mangazeev, Marcos Mari\~no, Szabolc Zakany for valuable discussions. The work is partially supported by Australian Research Council and Swiss National Science Foundation.


\def\cprime{$'$} \def\cprime{$'$}
\providecommand{\bysame}{\leavevmode\hbox to3em{\hrulefill}\thinspace}
\providecommand{\MR}{\relax\ifhmode\unskip\space\fi MR }
\providecommand{\MRhref}[2]{%
  \href{http://www.ams.org/mathscinet-getitem?mr=#1}{#2}
}
\providecommand{\href}[2]{#2}

\end{document}